\title{Identification and Inference with Min-over-max Estimators for the Measurement of Labor Market Fairness}
\author{Karthik Rajkumar\footnote{Email \href{mailto:krajkumar@linkedin.com}{krajkumar@linkedin.com}. I thank Guillaume Saint-Jacques, Kenneth Tay and YinYin Yu for feedback on earlier drafts, and Parvez Ahammad for his support of this work.}\\LinkedIn Corporation}
\newtheorem{prop}{Proposition}
\begin{document}

\maketitle

\begin{abstract}
    These notes shows how to do inference on the Demographic Parity (DP) metric. Although the metric is a complex statistic involving min and max computations, we propose a smooth approximation of those functions and derive its asymptotic distribution. The limit of these approximations and their gradients converge to those of the true max and min functions, wherever they exist. More importantly, when the true max and min functions are not differentiable, the approximations still are, and they provide valid asymptotic inference everywhere in the domain. We conclude with some directions on how to compute confidence intervals for DP, how to test if it is under 0.8 (the U.S. Equal Employment Opportunity Commission fairness threshold), and how to do inference in an A/B test.  
\end{abstract}

\section{Introduction}
Min-over-max style estimators arise naturally in problems studying disparities across groups. Consider the estimator that we call Demographic Parity (DP), which is the ratio of a certain outcome in the lowing-performing group to the same in the highest-performing group. One might compute such an estimator on, say, application response rates across groups where equity is desired. It is easy to compute and interpret: DP is less than or equal to 1 and the closer it is to unity, the closer we are to equity between groups. The DP metric makes equity audits simple: a value below 0.8 is indicative of inequity and calls for further investigation. 

Because the min and max functions are not differentiable everywhere, min-over-max estimators run into certain issues with statistical inference and asymptotic normality is not available. In these notes, we take an approximation approach to min-over-max estimators. Using smooth and differentiable approximations to the min and max functions, where the level of approximation is decided by a tuning parameter, we control the skewness of the asymptotic distributions of the ratio estimators and recover normality. Since larger values of the tuning parameter mean better approximations of our constituent functions, but also more asymptotic skewness, we provide upper bounds on it as a function of the sample size of the dataset. In this way, as datasets get larger and larger, the approximation matters less and we can get closer to the true ratio estimator, while preserving statistical normality.

The rest of the notes is organized as follows. Section \ref{sec:problem} motivates the problem with the Demographic Parity estimator and formalizes the mathematical notation. Section \ref{sec:asymptotics-sample-means} provides the basic asymptotic result we build on. Section \ref{sec:smooth-max} introduces the approximation trick that is core to this work and Section \ref{sec:asymptotic-distr-min-max} quantifies the approximation error in their asymptotic distribution. Section \ref{sec:inference-on-DP} presents the central asymptotic result with the approximation estimators, along with how to do inference in practical settings including how to tune the approximation parameter. Sections \ref{sec:other-considerations} and \ref{sec:small-sample-critique} go over extensions and limitations of our approximation strategy, and Section \ref{sec:conclusion} concludes. 

\section{Problem setup}
\label{sec:problem}
\begin{itemize}
    \item We have multiple groups, $g = 1, 2, ..., G$.
    \item Each group is associated with a success probability (such as the probability of landing a job given one application). We call this $s_g \in [0, 1]$.
    \item Each unit from group $g$ is sampled with a probability $p_g$. That is, sampling of units from groups happens with multinomial probabilities $\left\{p_g \right\}$, where $\sum_g p_g = 1$. 
    \item A unit here refers to one application. Implicitly, we model the number of applications by each \textit{individual} as random.
    \item Each unit realizes a binary outcome, $Y_i$, which depends on the success probability of its respective group. 
    \item The total number of units in the sample is $N$. 
    \item We obtain maximum-likelihood estimates (sample means) of success probabilities for each group, $\hat{s}_g = \frac{\sum_{i \in g} Y_i}{\sum_{i \in g} 1}$. 
\end{itemize}

Given this setup, our metric of interest, Demographic Parity (DP for short), is defined as 
\begin{equation*}
    \hat{\text{DP}} = \frac{\min_g \hat{s}_g}{\max_g \hat{s}_g},
\end{equation*}

and the theoretical estimand is 
\begin{equation}
    \text{DP}_0 = \frac{\min_g s_g}{\max_g s_g}.
\end{equation}

Note that, by definition, both $\text{DP}_0$ and $\hat{\text{DP}}$ are limited to the interval $[0, 1]$ since we have $0 \leq \min_g \left\{ a_g \right\} \leq \max_g \left\{ a_g \right\} \leq 1$ for any set of nonnegatives $\left\{ a_g \right\}_{g=1}^G$. 

\section{Asymptotics of sample means}
\label{sec:asymptotics-sample-means}

  The central limit theorem gives us 
  
  \begin{equation} \label{clt_means}
      \sqrt{N} \left( 
      \begin{bmatrix} \hat{s}_1 \\ \vdots \\ \hat{s}_G \end{bmatrix} - 
      \begin{bmatrix} s_1 \\ \vdots \\ s_G \end{bmatrix} \right) \stackrel{d}{\longrightarrow}
      \mathcal{N} \left(0, 
      \begin{bmatrix} \sigma_1^2 & & 0 \\ & \ddots & \\ 0 & & \sigma_G^2 \end{bmatrix} \right),
  \end{equation}
where $\sigma_g^2 = \frac{s_g (1 - s_g)}{p_g}$.

This variance-covariance matrix can be easily estimated with regression of binary outcomes on group fixed effects and heteroskedastic standard errors. The command to use for those in R is:
\begin{verbatim}
    vcovHC(type="HC0")
\end{verbatim}

\section{The smooth maximum function}
\label{sec:smooth-max}
To get the asymptotic distribution of the max and min of these sample means, we would typically resort to a delta method. However, the delta method requires a continuously differentiable function, which the max and min are not. For this reason, we use a ``smooth'' version of the max function, which is 
\begin{equation}
    \text{rsmax}_{\alpha}(a_1, ..., a_G) = \frac{1}{\alpha} \log \left( \sum_{j = 1}^G e^{\alpha a_j} \right),
\end{equation}

for a parameter $\alpha > 0$. Here $\text{rsmax}$ stands for ``real softmax'' \citep{zhang_lipton_li_smola}. This approximation function is also known as the LogSumExp function. The function has the nice property that
\begin{equation*}
    \lim_{\alpha \rightarrow \infty} \text{rsmax}_\alpha (\left\{ a_j \right\}) = \max (\left\{ a_j \right\}),
\end{equation*}
while maintaining differentiability. Figure \ref{fig:rsmax-approx} provides a visualization of the approximation. 

\begin{figure}[h]
    \centering
    \includegraphics[scale=0.4]{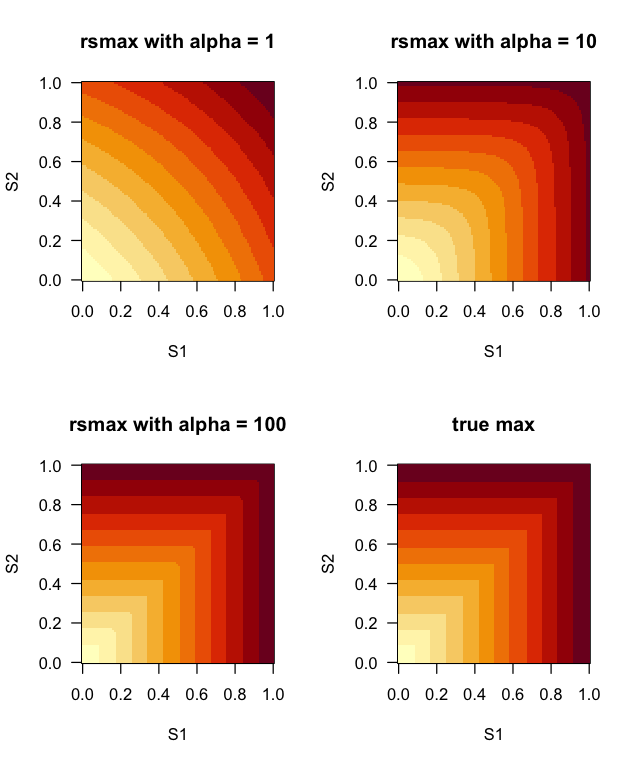}
    \caption{rsmax approximates the true max function as $\alpha \rightarrow \infty$. The lines represent points whose function values are equal.}
    \label{fig:rsmax-approx}
\end{figure}

Its gradient is 

\begin{equation}
    \text{softmax}_\alpha (\left\{ a_j \right\}) = \left\{ \frac{e^{\alpha a_j}}{\sum_i e^{\alpha a_i}} \right\}.
\end{equation}

The interpretation of the gradient is very simple when we look at the limiting case of $\alpha \rightarrow \infty$:
\begin{itemize}
    \item If there is a unique maximum among $\left\{ a_j \right\}$, the gradient vector is zero everywhere except for the element associated with the maximum, where it is 1. 
    \item If there are multiple equal maxima, the gradient vector is zero everywhere except for those elements associated with the maximum, where $\text{softmax}$ places equal weight on each maximum, i.e. $1/\text{\# maxima}$.
\end{itemize}

\subsection{The smooth minimum function}
The smooth minimum is the min version of the smooth maximum, and is defined as 
\begin{equation*}
    \text{rsmin}_\alpha (\left\{ a_j \right\} ) := -\text{rsmax}_{\alpha} (\left\{ -a_j \right\} ),
\end{equation*}
for $\alpha > 0$. As $\alpha \rightarrow \infty$, it converges to the true $\min$, while maintaining differentiability throughout. 

Its gradient is 
\begin{equation*}
    \text{softmin}_\alpha (\left\{ a_j \right\}) = \text{softmax}_{\alpha} (\left\{ -a_j \right\}),
\end{equation*}
and the interpretation of the gradient is that it is 1 at the unique minimum and 0 everywhere else. As usual, in the event of multiple equal minima, it places equal weight on each of them (still with 0 elsewhere). 

\subsection{Softmax computational stability}

Computing the functions rsmax and softmax has problems with floating point overflow, even though each $\hat{s}_j \in [0, 1]$ because we scale them by $\alpha$, which is a large number. Since we perform an expoentiating operation, even for moderate values of $\alpha$, the term $e^{\alpha s_j}$ could result in overflow. \\

To deal with this problem, we shift all inputs to the functions by their largest value \citep{blanchard2019accurate}. That is, we compute them as 
\begin{align*}
    \text{rsmax}_\alpha \left\{ a_j \right\} & = a_{(G)} + \frac{1}{\alpha} \log \left(\sum_g e^{\alpha(a_g - a_{(G)})} \right) \\
    \text{softmax}_\alpha \left\{ a_j \right\} & = \left\{ \frac{e^{\alpha (a_i - a_{(G)})}}{\sum_g e^{\alpha (a_g - a_{(G)})}} \right\},
\end{align*}
for $a_{(G)} = \max_g a_g$. This shift ensures every term $\alpha(a_j - a_{(G)}) \leq 0$. Thus, the exponential terms are always bounded by 1, and therefore the computation of the functions is numerically stable. \\

Given, these numerically stable rsmax and softmax functions, we can compute the corresponding rsmin and softmin functions as usual as \begin{align*}
    \text{rsmin}_\alpha \left\{ a_j \right\}  & = -\text{rsmax}_{\alpha} (\left\{ -a_j \right\} ) \\
    \text{softmin}_\alpha \left\{ a_j \right\} & = \text{softmax}_{\alpha} (\left\{ -a_j \right\}).
\end{align*}

\section{Asymptotic distributions of $\min$ and $\max$ of sample means}
\label{sec:asymptotic-distr-min-max}

Applying the delta method to equation \ref{clt_means} with function 
\begin{equation*}
    h(\left\{ a_j \right\}) = \begin{bmatrix} \text{rsmin}_\alpha \left\{ a_j \right\} \\
    \text{rsmax}_\alpha \left\{ a_j \right\} \end{bmatrix},
\end{equation*}

we get 
\begin{equation}
    \sqrt{N} \left( 
    \begin{bmatrix} \text{rsmin}_\alpha \left\{ \hat{s}_j \right\} \\
    \text{rsmax}_\alpha \left\{ \hat{s}_j \right\} \end{bmatrix} - 
    \begin{bmatrix} \text{rsmin}_\alpha \left\{ s_j \right\} \\
    \text{rsmax}_\alpha \left\{ s_j \right\} \end{bmatrix}\right) \stackrel{d}{\longrightarrow} 
    \mathcal{N} \left(0, \nabla ^\top h \cdot \Sigma \cdot \nabla h \right),
    \label{clt-minmax}
\end{equation}

where $\Sigma$ is the variance-covariance matrix from equation \ref{clt_means}, and $\nabla h$ is the Jacobian of function $h$ evaluated at the true means, $\left\{ a_j \right\}$. \\

The interpretation of this new variance-covariance matrix is straightforward. Consider the limiting case $\alpha \rightarrow \infty$, for the sake of exposition:
\begin{itemize}
    \item Suppose there is one unique max and one unique min among $\left\{ a_j \right\}$. Then the first column of $\nabla h$ is one at the min and the second column is one at the max, with the rest of the elements perfectly 0. 
    \item Thus, the variance-covariance matrix picks the true min sample mean and the true max sample mean as:
    \[
        \begin{bmatrix}
        \sigma^2_{\min} & 0 \\
        0 & \sigma^2_{\max} 
        \end{bmatrix},
    \]
    where $\sigma^2_{\min} = \sigma^2_i$ for $i$ the true min (likewise for the max).
    \item In the case of multiple max or mins (where not all elements are identical, that is max and min do not overlap), $\sigma^2_{\min}$ is an equally-weighted average of all the variables that are min. (Ditto for min.) There is still no covariance term because the max and min variables are different.
    \item In the final case where all values are identical (to be clear we mean values of $\left\{ s_j \right\}$), we have the above weighted variances, along with a covariance term, as follows:
    
    \begin{equation*}
    \frac{\sum_j \sigma_j^2}{G^2}
        \begin{bmatrix}
        1 & 1 \\
        1 & 1 \\
        \end{bmatrix}.
    \end{equation*}
    This follows from the fact that the min and the max are exactly identical now, and their estimator is an equally weighted average of each sample mean like so:  $\frac{1}{G} \sum_j \hat{s}_j$. 
\end{itemize}

Remember that the gradient of the true max function does not exist when two (or more) elements of the input vector are equal. However, the limit of the gradient of the real softmax function, i.e. the limit of the softmax function as $\alpha \rightarrow \infty$ \textit{does} exist. 

\subsection{Swapping the order of the limits $N \rightarrow \infty$ and $\alpha \rightarrow \infty$}
The true max and min functions are differentiable except at inputs where there are multiple argmaxes and argmins respectively. In that sense, we could directly apply the delta method on the true max and min functions and obtain the same asymptotic distribution as the ones derived above, because the limit of the softmax and softmin functions is identical to the gradients of the true max and min functions. The caveat is that this equivalence only works at inputs where there aren't multiple argmaxes (or argmins) in $\left\{ s_j \right\}$. \\

What about in the cases when we do have multiple argmaxes (or argmins) in $\left\{ s_j \right\}$? The \textit{practical} answer is that if we used a small enough $\alpha$, we have a smooth enough function and have perfect inference even in these edge cases. However, the lower the $\alpha$, the worse is the approximation of the true max function. Thus there is a tradeoff between having asymptotical normality (at low $\alpha$) and inferring the right estimand at the cost of asymptotic bias and non-normality (at high $\alpha$). See Appendix \ref{appendix-DP-nondifferentiable} for more on non-differentiability. \\

We now present a formal analysis of the approximation error when using Taylor expansions in the delta method. 

\subsection{Approximation error in the delta method}
We now quantify the error in our asymptotic distribution from using our approximation functions and provide guidance on how to choose the approximation parameter $\alpha$ that trades off the need for a better approximation with the stability properties that come from a well-behaved estimator.

The CLT of Bernoulli sample means gives us 
\begin{equation*}
    \sqrt{N} (\hat{s} - s) \stackrel{d}{\longrightarrow} \mathcal{N} (0, \Sigma),
\end{equation*}
for vector valued $\hat{s}$ and $s$. Consider a function $g$. In our context, $g$ could be the rsmax function or the rsmin. A first-order Taylor expansion of the estimate around the true mean gives us 
\begin{equation*}
    g(\hat{s}) = g(s) + \frac{\nabla g(s)}{1!} (\hat{s} - s) + \underbrace{R_1(\hat{s})}_{\text{Remainder}},
\end{equation*}
where $R_1$ is the remainder from the first-order approximation and is equal to $\frac{\nabla^2 g(s^\prime)}{2!} (\hat{s} - s)^2$ for some $s^\prime$ between $s$ and $\hat{s}$ (a convex combination of the two). Thus, our asymptotic distribution has two terms:
\begin{equation*}
    \sqrt{N} \left(g(\hat{s}) - g(s) \right) = \underbrace{\nabla g(s) \cdot \sqrt{N} (\hat{s} - s)}_{O_p(1) \text{ from CLT above}} + \sqrt{N} R_1(\hat{s}).
\end{equation*}

The remainder (approximation error) term, $\sqrt{N}R_1$, is 
\begin{equation*}
\begin{split}
    & \sqrt{N} \cdot \underbrace{\frac{\nabla^2 g(s^\prime)}{2}}_{\text{Assume $O_p(N^q)$}} \cdot \underbrace{(\hat{s} - s)^2}_{O_p(\frac{1}{N})} \\
    & = O_p(N^{1/2 + q -1}) \\
    & = O_p(N^{q - 1/2}).
\end{split}
\end{equation*}
Thus, if we want the approximation error to die down with $N$, we require $q < 1/2$. But what is $q$? Remember that it is the probabilistic order of the Hessian of our function $g$. Consider the case of the rsmax function (similar arguments apply to the rsmin). Its Hessian is the derivative of the softmax function. Denote the softmax function evaluated on the $i$th element as $\mathcal{S}_i$. Then its derivative is given by 
\begin{equation*}
    \frac{\partial}{\partial a_j} \mathcal{S}_i = \alpha \underbrace{\mathcal{S}_i (\delta_{ij} - \mathcal{S}_j)}_{\text{Bounded in [-1, 1]}}.
\end{equation*}
Here $\delta_{ij}$ is the Kronecker delta. Thus $\alpha$ determines the error rate of Hessian and from the above analysis, we determine that 
\begin{equation}
    \alpha = O(N^q) < O(\sqrt{N}).
    \label{eqn-alpha-threshold}
\end{equation}

For a discussion on using higher-order approximations, see Appendix \ref{sec:higher-order-approximations}. 

A couple of points to conclude this discussion:
\begin{itemize}
    \item As we can see from the functional form of the Hessian, we only need to worry when two $s_g$ are identical or close to each other given the sample size. Only here is its magnitude very large. Elsewhere in the domain, the Hessian is close to zero. 
    \item Thus, we only need to worry about the error of the delta method when some of the true maxes (or true mins) are identical. If this is the case, we need to control $\alpha$ to be strictly lower than $\sqrt{N}$. Elsewhere, $\alpha$ can be as large as desired (which means we could directly use the true min and max functions, because they are differentiable there). 
    \item Where max is not differentiable, the second order delta method term adds a Chi-squared term to the asymptotic distribution. This adds a positive bias to the distribution (and negative in the case of min) when $\alpha$ is too big. 
    \item Lowering $\alpha$ means changing the goalpost in terms of what we want to estimate. That is, instead of estimating the true max, we choose to estimate its smooth approximation instead.
    \item For a given $\alpha$, we may use a second-order Taylor expansion to incorporate the Chi-squared term and better model the asymptotic bias.
\end{itemize}

\section{Inference on DP}
\label{sec:inference-on-DP}

Let $\Sigma^\prime$ denote the variance-covariance matrix in equation \ref{clt-minmax}. Using the function $g(a, b) = \frac{a}{b}$ and the delta method on equation \ref{clt-minmax}, we then get
\begin{equation}
    \sqrt{N} (\hat{\text{DP}} - \text{DP}_0) \stackrel{d}{\longrightarrow} \mathcal{N} \left(0, \nabla^\top g \cdot \Sigma^\prime \cdot \nabla g  \right),
\end{equation}

for $\nabla g (a, b) = [1/b, -a/b^2]^\top$ evaluated at $a = \text{rsmin}_\alpha \left\{ s_j \right\}$ and $b = \text{rsmax}_\alpha \left\{ s_j \right\}$. This gives us the desired asymptotic distribution of DP. 

Call this final variance $\sigma^2_{\text{DP}}$. Then an appropriate 95\% two-sided confidence interval would be 
\begin{equation*}
    \text{DP}_0 \in \left[ \hat{\text{DP}} \pm 1.96 \frac{\hat{\sigma}_{\text{DP}}}{\sqrt{N}} \right].
\end{equation*}

\subsection{Importance of an appropriate rate of $\alpha$}
As discussed above, where max and min functions are differentiable, we have perfect asymptotic normality everywhere. However, there can be significant nonnormality when the true effects are identical (when the estimands are equal for at least some groups, i.e. where the min or max functions are nondifferentiable). We present a simple visualization to see how.\\

We work with two equally sized groups from a sample of size $N=1,000,000$. First, we start with distinct true success rates at 10\% and 5\%. We then plot the distribution of $\sqrt{N} (\hat{\text{DP}} - \text{DP}_0)$ for soft approximations of the DP function (centered on the soft version of the ground truth) and the ratio of the true min and max functions. We plot these histograms using 100,000 simulations each in Figure \ref{fig:distinct_means}. As expected, we find perfect asymptotic normality everywhere. \\

\begin{figure}[H]
    \centering
    \includegraphics[scale=0.6]{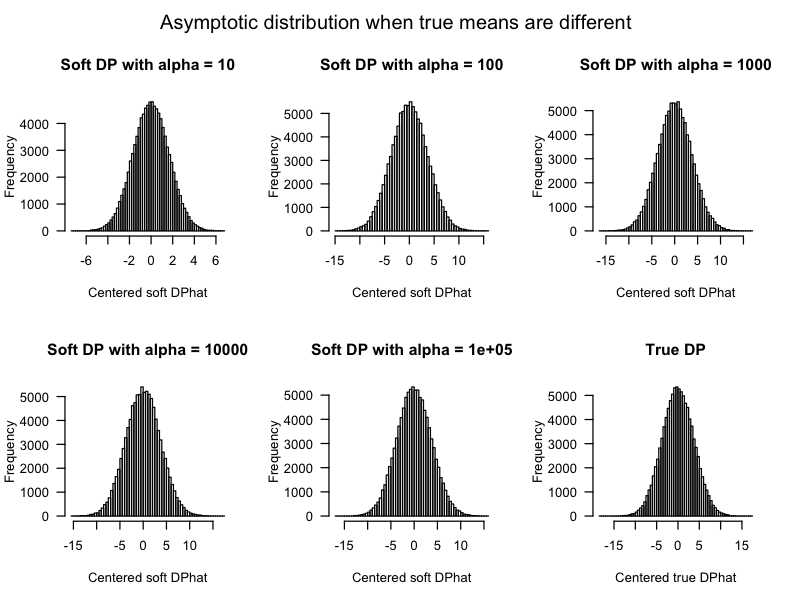}
    \caption{When true effects are distinct, we have asymptotic normality everywhere because the true min-over-max function itself is smooth and there is no need for the approximation. Here soft DP refers to the approximation estimator with degree of approximation $\alpha$ as indicated above, and true DP is the min-over-max estimator itself.}
    \label{fig:distinct_means}
\end{figure}

Next, we plot the case where the true means are in fact equal (at 10\%) in Figure \ref{fig:equal_means}. Remember that this is the case where we no longer have smoothness in the DP function to claim clean delta method inference. We find that we have asymptotic normality as long as $\alpha$ is small, specifically as $\alpha < O(\sqrt{N})$. This is indeed what our theory predicted (Look at equation \ref{eqn-alpha-threshold}). When $\alpha$ is large, we see significant left-skewness is the asymptotic distribution. This arises from the fact that, when the true means are equal, the min estimator is mechanically always lower than the max estimator and no longer identifies the ``true'' min group.

\begin{figure}[H]
    \centering
    \includegraphics[scale=0.5]{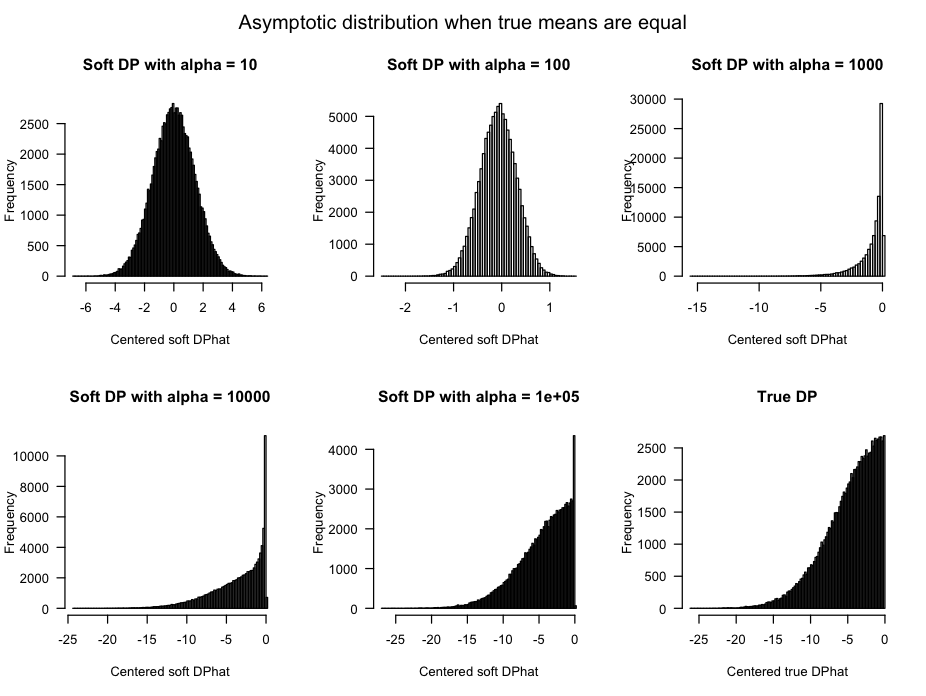}
    \caption{$N=1,000,000$ here. When true effects are identical, we have asymptotic normality as long as $\alpha < O(\sqrt{N})$. Once we have passed that $\alpha$ threshold—1,000 here—the distribution becomes sharply left-skewed for both the approximate soft DP estimator and the true DP estimator itself.}
    \label{fig:equal_means}
\end{figure}

\subsection{Testing if $\text{DP} < 0.8$}
Recall that $\text{DP} \in [0, 1]$. Therefore testing if $\text{DP} \in [0.8, 1.25]$ (the Equal Employment Opportunity Commission thresholds for parity) is the same as testing if $\text{DP} < 0.8$. We formulate the hypotheses as
\begin{align*}
    H_0: & \text{  DP} \geq 0.8 \\
    H_1: & \text{  DP} < 0.8.
\end{align*}

The Z-statistic of interest would be 
\begin{equation*}
    Z = \frac{\hat{\text{DP}} - 0.8}{\hat{\sigma}_{\text{DP}} / \sqrt{N}},
\end{equation*}

and we test if $Z < -1.645$, the 5\% quantile of the standard Normal distribution. In other words, the p-value for this one-sided test is 
\begin{equation*}
    \text{p-val} = \Phi(Z),
\end{equation*}
for a standard normal CDF, $\Phi(.)$, and we reject the null if this p-value is under 0.05. 

\subsection{Testing if $\text{DP}_A \neq \text{DP}_B$ in an A/B test}
Suppose we run an A/B test with two variants, $A$ and $B$. We seek to test if the experiment caused a change in the equity distribution across variants. Then we have two sets of statistics, $(\hat{DP}_A, \frac{\hat{\sigma}_A}{\sqrt{N_A}})$ for variant $A$ and $(\hat{DP}_B, \frac{\hat{\sigma}_B}{\sqrt{N_B}})$ for variant $B$. The two-sided hypotheses are:
\begin{align*}
    H_0: & \text{  DP}_B = \text{DP}_A \\
    H_1: & \text{  DP}_B \neq \text{DP}_A.
\end{align*}

The test statistic now is 
\begin{equation*}
    Z = \frac{\hat{\text{DP}}_B - \hat{\text{DP}}_A}{\sqrt{\frac{\hat{\sigma}_B^2}{N_B} + \frac{\hat{\sigma}_A^2}{N_A}}},
\end{equation*}
and we test if $|Z| < 1.96$. If not, we reject the null. The p-value for this test is $2 (1 - \Phi(|Z|))$.

If a one-sided test is desired, say to test of variant $B$ is more equal than variant $A$, like so
\begin{align*}
    H_0: & \text{  DP}_B \leq \text{DP}_A \\
    H_1: & \text{  DP}_B > \text{DP}_A,
\end{align*}
the test statistic remains the same, but now we test if $Z < 1.645$. If not, we reject the null. The p-value for this test is $1 - \Phi(Z)$. As a reminder, a one-sided test is more powerful because we work under the assumption that $B$ would only be significantly better than $A$, and do not test the opposite direction. 

%\section{Simulation study}
%We compare the delta method, the simulation confidence interval, and the cross-fitted estimator.

\section{Other considerations}
\label{sec:other-considerations}
\subsection{Limited dependent variables}
In our analysis, we worked under a Bernoulli model where we assumed iid data and that we observe successes for each trial. In an online jobs marketplace, this means we observe whether each job application was successful or not. That is, we assume we can see all job offers a member has received. In the real world, this may not be the case. We may only observe all applications a member has submitted and the final job they accepted (and updated on their profile). \\

Consider this new model: 
\begin{itemize}
    \item A member $i$ belongs to group $g$, which we observe, and has job success probability of $s_g$.
    \item Additionally, we observe that they have applied to $N_i$ jobs.
    \item Naturally, the number of job offers they obtain $Y_i^* \sim \text{Binom}(s_g, N_i)$. However, we do not observe this variable.
    \item Instead, we observe whether the member has a new job or not at the end of the study period. That is, we observe $Y_i = 1\left\{Y_i^* \geq 1 \right\}$.
\end{itemize}
  Then, from the binomial distribution, we know that $Y_i$ is 1 with probability $1 - (1 - s_g)^{N_i}$ and 0 otherwise. That is,
  \begin{equation*}
      Y_i \sim \text{Bernoulli} \left(1 - (1 - s_g)^{N_i} \right).
  \end{equation*}
  Thus, within each group, the log likelihood function is 
  \begin{equation*}
      l(s_g; \left\{Y_i, N_i \right\}) = \sum_i \left( Y_i \log(1 - (1 - s_g)^{N_i}) + (1-Y_i)N_i \log(1-s_g) \right).
  \end{equation*}
  
  We can then obtain the MLE $\hat{s}_g$, which is asymptotically normal. Given this asymptotic normality, we can reapply the machinery we developed above to do inference on DP. 
  
\section{A small-sample critique of the DP estimator}
\label{sec:small-sample-critique}

For the estimand, $DP_0 = \frac{\min_g s_g}{\max_g s_g}$, the sample ratio estimator, $\frac{\min_g \hat{s}_g}{\max_g \hat{s}_g}$ is natural. Indeed, we showed above that it is consistent and has asymptotic normality. The problem of this estimator, though, is that it assumes there is an unambiguous min and max and the only uncertainty to be modeled is in the precise magnitude of this min and max. 

In finite sample, when confidence intervals of various $s_g$ may overlap, there is additional uncertainty in whether we picked the right $g$ as the argmin or argmax. This means $\min_g \hat{s}_g$ is biased downward and $\max_g \hat{s}_g$ is biased upward (To see this, recall that the max function is convex. Now apply Jensen's inequality to show the result). 

Practically, if the true $DP_0 = 1$, then the estimator $\hat{\text{DP}}$ is biased downward and will have less than perfect coverage. We may also have poor coverage when some groups are especially small and thus have their sample means estimated with much imprecision.

As an alternative to the sample min and max estimators, we may use cross-fitted estimators \citep{van2013estimating}. They work as follows:
\begin{enumerate}
    \item Split the data into $K$ folds. 
    \item For each $k \in \left\{1, 2, ..., K \right\}$, find the argmax and argmin groups on the $-k$ folds, i.e. on all the folds other than the $k$-th one. \label{sample-split1}
    \item Using the argmin group in the numerator and the argmax group in the denominator, obtain an estimate of DP on the $k$-th fold. \label{sample-split2}
    \item Repeat steps \ref{sample-split1} -- \ref{sample-split2} to obtain DP estimates on all $K$ folds.
    \item The average of these $K$ estimates is our final estimate.
\end{enumerate}

% \subsection{Inference in a cluster-randomized experiment}
  
\section{Conclusion}
\label{sec:conclusion}
  
  We conclude by reiterating that the DP inference machinery we developed here is much more generic than the motivating Binomial model suggests. It can be applied for the min and maxes of any parameters, for whose estimators we have asymptotic normality, which is several of the most popular econometric estimators.

%\subsection{Variance stabilizing transforms}
%\subsection{Adaptive $\alpha$ to value of $s$ (Higher when far away from points of nondifferentiability)}

\bibliographystyle{apalike} 
\bibliography{references}

\pagebreak

\appendix 
\section{Appendix}
\subsection{Behavior of $\hat{\text{DP}}$ estimator where max function is not differentiable}
\label{appendix-DP-nondifferentiable}
\begin{prop}
The true max estimator is neither asymptotically unbiased not normally distributed wherever the max function is non-differentiable.
\end{prop}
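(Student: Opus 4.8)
\section*{Proof proposal}

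The plan is to work directly with the limiting distribution of $\sqrt{N}\bigl(\max_g \hat s_g - \max_g s_g\bigr)$, obtained from \eqref{clt_means} via the continuous mapping theorem, and to show this limit is a strictly-positively-centered, non-Gaussian random variable precisely at the parameter points where $s \mapsto \max_g s_g$ fails to be differentiable. Note first that $\max_g s_g$ is differentiable at $s$ if and only if the argmax is unique; so ``wherever the max function is non-differentiable'' means exactly ``wherever the argmax of $\{s_g\}$ is not unique,'' and it suffices to treat that case.

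First I would fix such a point $s = (s_1,\dots,s_G)$, set $s^\ast = \max_g s_g$, and let $M = \{g : s_g = s^\ast\}$ with $k := |M| \ge 2$. Writing $Z_g \sim \mathcal N(0,\sigma_g^2)$ (independent) for the coordinates of the Gaussian limit in \eqref{clt_means}, I would note that for $g \notin M$,
\begin{equation*}
    \sqrt{N}(\hat s_g - s^\ast) = \sqrt{N}(\hat s_g - s_g) + \sqrt{N}(s_g - s^\ast) \stackrel{p}{\longrightarrow} -\infty,
\end{equation*}
so the non-argmax coordinates are asymptotically irrelevant in the maximum, while for $g \in M$ the centering at $s^\ast$ is exact. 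Hence by the continuous mapping theorem
\begin{equation*}
    \sqrt{N}\bigl(\max_g \hat s_g - s^\ast\bigr) \stackrel{d}{\longrightarrow} W := \max_{g \in M} Z_g,
\end{equation*}
a maximum of $k \ge 2$ independent centered Gaussians.

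It remains to verify the two claimed failures for $W$. For the bias, when $k = 2$ we have $W = \tfrac12(Z_i + Z_j) + \tfrac12 |Z_i - Z_j|$ with $Z_i - Z_j \sim \mathcal N(0,\sigma_i^2 + \sigma_j^2)$ nondegenerate, so $\mathbb E W = \tfrac12 \mathbb E |Z_i - Z_j| = \sqrt{(\sigma_i^2 + \sigma_j^2)/(2\pi)} > 0$; for general $k \ge 2$, $W \ge \max(Z_i, Z_j)$ pointwise for any two distinct $i,j \in M$, so $\mathbb E W > 0$ as well. To promote this to a statement about the asymptotic bias of the estimator I would invoke uniform integrability: for a fixed $i \in M$,
\begin{equation*}
    \sqrt{N}(\hat s_i - s_i) \;\le\; \sqrt{N}\bigl(\max_g \hat s_g - s^\ast\bigr) \;\le\; \sum_g \bigl|\sqrt{N}(\hat s_g - s_g)\bigr|,
\end{equation*}
and both bounds are $L^2$-bounded uniformly in $N$, so the sequence is uniformly integrable and $\mathbb E\,\sqrt{N}(\max_g \hat s_g - s^\ast) \to \mathbb E W \ne 0$. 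For non-normality I would compare tail decay rates: a nondegenerate Gaussian has matching left and right tail rates, whereas, as $t \to \infty$,
\begin{equation*}
    \log \mathbb P(W \le -t) = \sum_{g \in M} \log \Phi(-t/\sigma_g) \sim -\tfrac{t^2}{2}\sum_{g \in M}\tfrac{1}{\sigma_g^2}, \qquad \log \mathbb P(W > t) \sim -\tfrac{t^2}{2}\,\min_{g \in M}\tfrac{1}{\sigma_g^2},
\end{equation*}
and since $k \ge 2$ the coefficient on the left strictly exceeds that on the right, so $W$ cannot be Gaussian. The min estimator is handled identically by applying the argument to $\{-\hat s_g\}$, yielding a strictly negative asymptotic bias.

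The main obstacle is not any single computation but making the dual claim ``neither asymptotically unbiased nor normally distributed'' precise: convergence in distribution alone does not pin down the limiting mean, so the uniform-integrability step (or, alternatively, an explicit finite-$N$ bound on $\mathbb E[\max_g \hat s_g]$ via Jensen, as sketched in Section~\ref{sec:small-sample-critique}) is the part that needs real care. The other delicate point is the equal-variance subcase $\sigma_i = \sigma_j$ of the tail comparison, where one must still observe that $\log\mathbb P(W \le -t) \sim -k\,t^2/(2\sigma^2)$ decays a factor $k$ faster than $\log\mathbb P(W > t) \sim -t^2/(2\sigma^2)$, so the conclusion survives.
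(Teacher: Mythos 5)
Your proof is correct, and it takes a genuinely broader route than the paper's. The paper establishes the proposition only in the simplest non-differentiable configuration ($G=2$, $p_1=p_2$, $s_1=s_2$), by writing the CDF of $\hat s_{(2)}$ as $F_M(x)^2$ and showing the limiting law of $\sqrt{N}(\hat s_{(2)}-s)$ has CDF $\Phi\bigl(x/\sqrt{2s(1-s)}\bigr)^2$, whose non-normality and positive mean are then asserted rather than argued. You instead characterize non-differentiability as non-uniqueness of the argmax, discard the non-argmax coordinates via the $\sqrt{N}(s_g-s^\ast)\to-\infty$ observation, and identify the limit as $W=\max_{g\in M}Z_g$ for an arbitrary tie set $M$ with heterogeneous variances --- which covers every point where the proposition makes a claim, not just the two-group equal-variance case. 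You also supply the two steps the paper leaves implicit: the uniform-integrability (or $L^2$-domination) argument needed to upgrade convergence in distribution to convergence of expectations, so that ``asymptotically biased'' is actually a statement about $\lim_N \mathbb{E}\,\sqrt{N}(\max_g\hat s_g - s^\ast)$; and a genuine proof of non-Gaussianity via the mismatch between the left-tail rate $\sum_{g\in M}\sigma_g^{-2}$ and the right-tail rate $\min_{g\in M}\sigma_g^{-2}$, which is strict precisely because $|M|\ge 2$. The paper's explicit $\Phi^2$ computation buys a concrete closed form for the limit law in its special case; your argument buys the full generality the proposition actually states, at the cost of a slightly more abstract limit object. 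The only caveat worth flagging is minor: the $L^2$ bounds in your domination step implicitly condition on all group counts being positive (the multinomial sampling makes $\hat s_g$ ill-defined on the vanishing-probability event $n_g=0$), but this is the same level of informality the paper itself operates at.
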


Consider the simplest case where the max function is not differentiable. We have exactly 2 groups ($G$ = 2), equally sampled (i.e. $p_1 = p_2 = p = 0.5$) and they both have equal success probabilities $s_1 = s_2 = s$. 

A simple way to see the result: in large sample, 
\begin{equation*}
    \hat{s}_g \sim s + \sqrt{\frac{s(1-s)}{N/2}} Z_g,
\end{equation*}
for iid standard normal variables, $Z_g$. Thus,
\begin{equation*}
    \begin{split}
        \hat{s}_{(2)}  & := \max (\hat{s}_1, \hat{s}_2) \\
        & \sim s + \sqrt{\frac{s(1-s)}{N/2}} \max(Z_1, Z_2).
    \end{split}
\end{equation*}
$\max(Z_1, Z_2)$ is not normally distributed, and at any rate, is not centered at 0. This demonstrates our result. For a more formal treatment, see the proof below.

\begin{proof}
The sample means, $\hat{s}_g$ are iid for both groups at any given sample size, $N$, and they follow the distribution $\frac{1}{M} \text{Binom}(M, s)$, where $M = N/2$. Call this distribution $F_M(x)$.

Then the distribution of $\hat{s}_{(2)} = \max_g \hat{s}_g = F_M(x)^G = F_M(x)^2$, given that the two sample means are iid. Remember that this sample max estimator is estimating the true max, which is simply $s$.

From the CLT of the Binomial distribution, we know that 
\begin{equation*}
    \sqrt{\frac{M}{s(1-s)}} \left( \hat{s}_g - s \right) \stackrel{d}{\longrightarrow} \mathcal{N} (0, 1) \qquad \forall g \in \left\{1, 2 \right\}.
\end{equation*}
Put another way, we have 
\[
    P \left( \sqrt{\frac{M}{s(1-s)}} \left( \hat{s}_g - s \right) \leq x \right) \rightarrow \Phi(x) \qquad \forall x \in \mathbb{R},
\]
where $\Phi$ is the CDF of the standard normal distribution. This implies
\begin{equation*}
    P \left( \hat{s}_g \leq \sqrt{\frac{s(1-s)}{M}} x + s \right) \rightarrow \Phi(x).
\end{equation*}
That is,
\begin{equation}
   \lim_{M \rightarrow \infty} F_M \left( \sqrt{\frac{s(1-s)}{M}} x + s \right) = \Phi(x).
\end{equation}

We now want the distribution of $\sqrt{N} \left( \hat{s}_{(2)} - s \right)$, since we know $\hat{s}_{(2)} \stackrel{p}{\rightarrow} s$. So, we want 
\begin{equation*}
\begin{split}
        P\left(\sqrt{N}( \hat{s}_{(2)} - s) \leq x \right) & = P\left(\sqrt{2M}( \hat{s}_{(2)} - s) \leq x \right) \\
    & = P\left( \hat{s}_{(2)} \leq \frac{x}{\sqrt{2M}} + s \right) \\
    & = F_M \left( \frac{x}{\sqrt{2M}} + s \right)^2 \\
    & = F_M \left( \sqrt{\frac{s(1-s)}{M}} \left( \frac{x}{\sqrt{2s(1-s)}} \right) + s \right)^2 \\
    & \longrightarrow \Phi \left( \frac{x}{\sqrt{2s(1-s)}} \right)^2.
\end{split}
\end{equation*}
That is, $\sqrt{N} \left( \hat{s}_{(2)} - s \right)$ is \textit{not} asymptotically normal and has a strictly positive mean. 
\end{proof}

\begin{prop}
The true DP estimator is asymptotically one-tailed. That is, $\sqrt{N}(\hat{\text{DP}} - \text{DP}_0) \leq 0$ almost surely. Thus, the true DP estimator is neither asymptotically unbiased nor normally distributed. 
\end{prop}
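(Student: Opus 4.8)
The plan is to separate the two assertions in the statement. The ``one-tailed'' claim, $\sqrt{N}(\hat{\text{DP}} - \text{DP}_0) \le 0$ a.s., is a deterministic inequality that needs almost no work, whereas ``neither asymptotically unbiased nor normally distributed'' requires identifying the limiting law of $\sqrt{N}(\hat{\text{DP}} - \text{DP}_0)$. Throughout I work in the configuration fixed just above the preceding proposition: $G = 2$, $p_1 = p_2 = 1/2$, and $s_1 = s_2 = s \in (0,1)$, so that $\text{DP}_0 = s/s = 1$.

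First I would record the elementary observation that on every realization of the data $\min_g \hat{s}_g \le \max_g \hat{s}_g$, hence $\hat{\text{DP}} \le 1 = \text{DP}_0$, and therefore $\sqrt{N}(\hat{\text{DP}} - \text{DP}_0) \le 0$ surely (a fortiori almost surely) for every $N$. This is the one-tailed claim; it also already forces any limiting distribution of $\sqrt{N}(\hat{\text{DP}} - \text{DP}_0)$ to be supported on $(-\infty,0]$, so that limit has nonpositive mean, with a zero mean possible only if it is degenerate at $0$. Hence the rest of the argument must produce a non-degenerate, non-normal limit with strictly negative mean.

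Next I would rewrite $\hat{\text{DP}} - 1 = \dfrac{\min_g \hat{s}_g - \max_g \hat{s}_g}{\max_g \hat{s}_g} = -\,\dfrac{|\hat{s}_1 - \hat{s}_2|}{\max(\hat{s}_1, \hat{s}_2)}$, so that $\sqrt{N}(\hat{\text{DP}} - 1) = -\,\dfrac{\sqrt{N}\,|\hat{s}_1 - \hat{s}_2|}{\max(\hat{s}_1, \hat{s}_2)}$. The two sample means are independent with $\sigma_g^2 = s(1-s)/p_g = 2s(1-s)$, so the CLT of \eqref{clt_means} gives $\sqrt{N}(\hat{s}_1 - \hat{s}_2) \stackrel{d}{\longrightarrow} W$ with $W \sim \mathcal{N}\big(0,\,4s(1-s)\big)$. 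Applying the continuous mapping theorem to $|\cdot|$ and then Slutsky's theorem together with $\max(\hat{s}_1, \hat{s}_2) \stackrel{p}{\longrightarrow} s > 0$ yields $\sqrt{N}(\hat{\text{DP}} - 1) \stackrel{d}{\longrightarrow} -|W|/s$, a negative half-normal; a computation analogous to that in the proof of the preceding proposition identifies its limiting CDF as $x \mapsto 2\,\Phi\!\big(\frac{x\sqrt{s}}{2\sqrt{1-s}}\big)$ for $x \le 0$ (and $1$ for $x > 0$).

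Finally I would read off the conclusions from this limit: being supported on $(-\infty,0]$ it is visibly not normal, and it has mean $-\dfrac{2\sqrt{s(1-s)}}{s}\sqrt{2/\pi} = -2\sqrt{\dfrac{2(1-s)}{\pi s}} < 0$, so $\hat{\text{DP}}$ carries a negative bias of order $N^{-1/2}$ and the naive normal-based confidence interval under-covers at $\text{DP}_0 = 1$ --- exactly the phenomenon flagged by the Jensen-inequality remark in Section~\ref{sec:small-sample-critique} and visible in Figure~\ref{fig:equal_means}. I do not expect a genuine obstacle; the only points requiring care are (i) noting that the one-tailed inequality is literally deterministic in the equal-means case, so the ``a.s.'' is automatic, and (ii) invoking Slutsky correctly so as to strip off the random denominator $\max(\hat{s}_1,\hat{s}_2)$ before passing to the limit.
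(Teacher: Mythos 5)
Your proof is correct and takes essentially the same route as the paper's: restrict to the two-group, equal-means case, write $\hat{\text{DP}} - 1$ as $-\,|\hat{s}_1 - \hat{s}_2| / \max(\hat{s}_1, \hat{s}_2)$ (the paper's $\min_g Z_g - \max_g Z_g$ over the max), and pass to a negative half-normal limit via the CLT and Slutsky. Your write-up is somewhat more careful about the continuous-mapping/Slutsky steps and adds the explicit limiting CDF and mean, but the substance is identical.
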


\begin{proof}
As above, consider the case where we have two equally-sized groups with equal means:
\begin{equation*}
    \hat{s}_g \sim s + \frac{\sigma}{\sqrt{N/2}} Z_g.
\end{equation*}

Now, 
\begin{equation*}
    \begin{split}
        \sqrt{N} (\hat{\text{DP}} - \text{DP}_0) & = \sqrt{N} \left( \frac{\min_g \hat{s}_g}{\max_g \hat{s}_g} - 1 \right) \\
        & = \sqrt{N} \left( \frac{s + \frac{\sigma}{\sqrt{N/2}} \min_g Z_g}{s + \frac{\sigma}{\sqrt{N/2}} \max_g Z_g} - 1\right) \\
        & =  \sqrt{2}\sigma \left( \frac{\min_g Z_g - \max_g Z_g}{s + \frac{\sigma}{\sqrt{N/2}} \max_g Z_g} \right) \\ 
        & \stackrel{d}{\longrightarrow} \frac{\sqrt{2}\sigma}{s} \left( \min_g Z_g - \max_g Z_g \right).
    \end{split}
\end{equation*}

The random variable $\min_g Z_g - \max_g Z_g \leq 0$ almost surely. Further, in the two group case, we can simplify this further to 
\begin{equation*}
    -\frac{\sqrt{2} \sigma}{s} \left| Z_1 - Z_2 \right|,
\end{equation*}
which follows a Half-normal distribution on the negative reals.
\end{proof}

\subsection{Higher order approximations}
\label{sec:higher-order-approximations}
It is tempting to use a higher order terms of the Taylor expansion to get an even better approximation of the function. Suppose we used a second order Taylor expansion of the form
\begin{equation*}
    g(\hat{s}) = g(s) + \frac{\nabla g(s)}{1!} (\hat{s} - s) + \frac{\nabla^2 g(s)}{2}(\hat{s} - s)^2 + \underbrace{R_2(\hat{s})}_{\text{Remainder}}.
\end{equation*}
Again, consider the case of the $g$ being the rsmax function. Its third derivative has the form
\begin{equation*}
\begin{split}
    \frac{\partial^3 g}{\partial a_i \partial a_j \partial a_k} & = \alpha \cdot \underbrace{(\delta_{ij} - \mathcal{S}_j)\frac{\partial \mathcal{S}_i}{\partial a_k}}_{O_p(N^q)} - \alpha \cdot \underbrace{\mathcal{S}_i \frac{\partial \mathcal{S}_j}{\partial a_k}}_{O_p(N^q)} \\
    & = O_p(\alpha N^q) \\
    & = O_p(N^{2q}).
\end{split}
\end{equation*}

So the remainder approximation error term, $\sqrt{N} R_2$, which is now 
\begin{equation*}
    \sqrt{N} \cdot \underbrace{\frac{\nabla^3 g(s^{*})}{3!}}_{O_p(N^{2q})} \cdot \underbrace{(\hat{s}-s)^3}_{O_p(\frac{1}{N^{3/2}})}.
\end{equation*}
Thus, again we require $q < 1/2$ for this approximation error term to die down with $N$, meaning $\alpha < O(\sqrt{N})$ as before. In other words, the second order remainder does not buy us a larger $\alpha$ but for a given $\alpha$, it does reduce the remainder error by a polynomial order of magnitude. Observe what happens in the general case. Higher order derivatives become more and more unstable in the form of 
\begin{equation*}
    \nabla^k g(s) = O(\alpha^{k-1}) = O(N^{q(k-1)}).
\end{equation*}

This means the $k$th order Taylor expansion will have remainder error of the order
\begin{equation}
    O_p(R_k(\hat{s})) = O_p \left(\frac{1}{N^{(1/2-q)k}} \right).
\end{equation}
That is, for a given $\alpha$ (or a given $q$), we can improve approximation error with higher order terms. \\

\end{document}